\documentclass[letterpaper,10pt,conference]{ieeeconf}  

\IEEEoverridecommandlockouts                              

\usepackage{graphics} 
\usepackage{epsfig} 
\usepackage{amsmath} 
\usepackage{amssymb}  
 
\usepackage{amsthm}
\usepackage{cite}
\usepackage{color}
\usepackage{enumerate}
\usepackage{graphicx}
\usepackage{mathtools}
\usepackage{cancel}
\usepackage{algorithm}
\usepackage{algpseudocode}
\usepackage{textgreek}
\usepackage{setspace}
\usepackage{cleveref}
\crefname{equation}{}{}
\crefname{proposition}{Proposition}{Propositions}
\usepackage{url}
\newcommand{\nspace}[1]{\mkern#1mu}
\graphicspath{{figures/}}

\newcommand{\naturals}{\mathbb{N}}
\newcommand{\real}{\mathbb{R}}

\newcommand{\realpos}{\mathbb{R}_{> 0}}

\newcommand{\map}[3]{#1:#2 \rightarrow #3}

\newcommand{\setdefb}[2]{\big\{#1 \; | \; #2\big\}}

\newcommand*{\SetSuchThat}[1][]{} 
\newcommand*{\MvertSets}{%
    \renewcommand*\SetSuchThat[1][]{%
        \mathclose{}%
        \nonscript\;##1\vert\penalty\relpenalty\nonscript\;%
        \mathopen{}%
    }%
}
\MvertSets 
\DeclarePairedDelimiterX \Set [2] {\lbrace}{\rbrace}
    {\,#1\SetSuchThat[\delimsize]#2\,}

\newcommand{\Cc}{\mathcal{C}}
\newcommand{\Dc}{\mathcal{D}}

\newcommand{\Fc}{\mathcal{F}}

\newcommand{\Hc}{\mathcal{H}}

\newcommand{\Kc}{\mathcal{K}}

\newcommand{\Uc}{\mathcal{U}}
\newcommand{\Rc}{\mathcal{R}}
\newcommand{\Sc}{\mathcal{S}}

\newcommand{\Xc}{\mathcal{X}}

\newcommand{\R}{\mathbb{R}}
\newcommand{\Tb}{T_{\rm b}}

\newcommand{\Ke}{\Kc^{\rm e}}

\newcommand{\defeq}{\triangleq}

\newtheorem{theorem}{Theorem}
\newtheorem{lemma}{Lemma}

\newtheorem{proposition}{Proposition}

\theoremstyle{definition}
\newtheorem{definition}{Definition}

\newtheorem{remark}{Remark}
\newtheorem{example}{Example}
\newtheorem{assumption}{Assumption}
\newtheorem{problem}{Problem}

\renewcommand{\bf}{\mathbf{f}} 
\newcommand{\bg}{\mathbf{g}}

\newcommand{\bk}{\mathbf{k}}

\newcommand{\bu}{\mathbf{u}}

\newcommand{\bx}{\mathbf{x}}

\newcommand{\bphi}{\boldsymbol{\varphi}}

\newcommand{\kb}{\bk_{\rm b}}

\renewcommand{\baselinestretch}{1}
\allowdisplaybreaks

\DeclareMathOperator*{\argmin}{argmin}

\title{\LARGE \textbf{Filtering for Liveness: \\ Recurrence via Backup Reachability Control Barrier Functions}}

\author{David E. J. van Wijk, Pio Ong, Aaron D. Ames, Joel W. Burdick 
\thanks{DVW and PO are with the Amazon SAF Lab, \texttt{\{vanwijk, pioong\}@amazon.com}.
This work does not relate to the authors' positions at Amazon. ADA and JWB are with the Department of Mechanical and Civil Engineering, California Institute of Technology, Pasadena, CA 91125, USA. \texttt{\{ames, jburdick\}@caltech.edu}.}
\thanks{This work was in part supported by the Technology Innovation Institute, and DARPA under the LINC program.}
}

\begin{document}
\maketitle

\begin{spacing}{0.97}
\begin{abstract}
Liveness requires a dynamical system to repeatedly accomplish a desired objective over time. In this work, we consider a particular form of liveness known as recurrence, which requires the closed-loop trajectory to regularly revisit a prescribed set, with no more than a specified amount of time between visits. Inspired by the control barrier function (CBF) safety-filter paradigm, we develop a controller-filtering framework that makes adjustments to a nominal controller only when necessary to guarantee recurrence. Unlike conventional CBF methods, which enforce set forward invariance, our framework allows the trajectory to leave the set while guaranteeing its timely return. To achieve this, our framework combines two components: a \textit{backup reachability CBF} that preserves the ability to return to the prescribed set within the allowable time, and a timing-based mechanism that invokes a reaching controller only when the nominal behavior would otherwise fail to meet the return-time requirement. The resulting framework guarantees recurrence while preserving the nominal behavior when possible, enabling control objectives that require temporary excursions from the set, such as traversal between disjoint regions. Our method is successfully utilized in a drone racing example with a recurrence constraint on beacon visibility.
\end{abstract}

\section{Introduction}

Safety is essential for the deployment of autonomous systems in real-world environments, and therefore establishing formal safety guarantees has become a central research domain.
Many approaches rely on the notion of \textit{forward invariance}, where safety is enforced by requiring the system trajectory to remain within a prescribed set for all times. These approaches include control barrier functions (CBFs) \cite{ADA-XX-JWG-PT:17}, Hamilton--Jacobi reachability \cite{SB-MC-SH-CJT:17,choi2021robust}, model predictive control (MPC) \cite{GARCIA1989335}, and reference governors~\cite{GARONE2017306,freire_DSM_26}.

In addition to safety, many autonomous systems must also satisfy \textit{liveness} objectives, which require desired events or conditions to occur over time \cite{ALPERN1985181,10066157}. In this letter, we focus on a particular type of liveness, known as \textit{recurrence}, which states that the dynamical system is allowed to leave a prescribed set as long as it returns within a pre-specified amount of time. Such requirements arise, for example, when a system must regularly visit a region where localization measurements are available. Although recurrence differs from forward invariance in allowing temporary excursions from the prescribed set, the two notions are closely related from a control design perspective. This connection suggests that tools developed for safety and forward invariance may provide a good foundation for enforcing recurrence.
\begin{figure}[t]
    \centering
    \includegraphics[width=.95\linewidth]{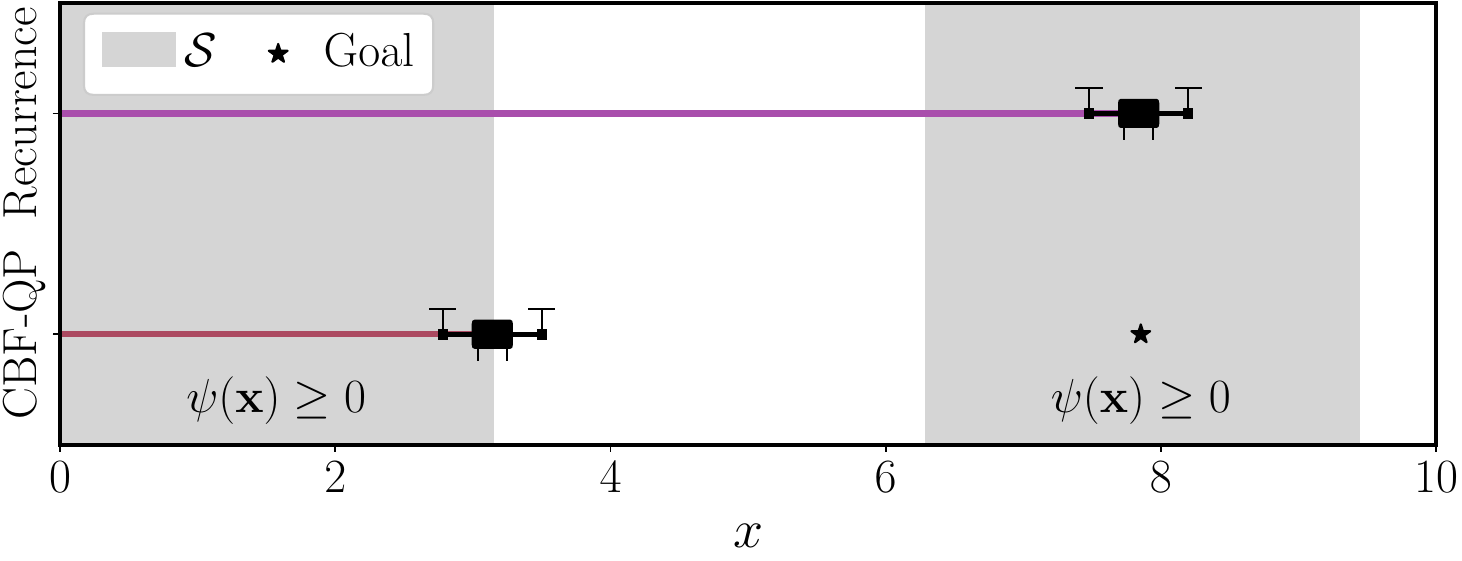}
    \vspace{-.45cm}
    \caption{{Depiction of recurrence versus invariance. Forward invariance endowed by the standard CBF-QP (red) yields forward recurrence but may be restrictive when just recurrence will suffice. In such cases, traveling between disjoint sets is allowed and the goal can be reached (purple).}}
    \label{fig:hero_single_integrator}
    \vspace{-.6cm}
\end{figure}
Our work builds on CBFs, a popular tool for certifying forward invariance and characterizing control inputs that satisfy state constraints. In particular, CBF constraints naturally lend themselves to \textit{controller filtering}, where a nominal task-oriented controller is modified only when necessary to maintain the desired constraint~\cite{ADA-SC-ME-GN-KS-PT:19}. This \textit{filtering} paradigm is particularly attractive for balancing task performance and constraint satisfaction.
Recurrence has recently been studied in the context of CBFs \cite{liu2025recurrent}
where the notion of recurrent CBFs was developed to verify recurrence of prescribed sets. The authors have also studied this notion in the context of stability \cite{10383373}, and regions of attraction \cite{9993280}. However, these approaches primarily provide tools for set verification, rather than controller filtering.
Another related notion to recurrence is 
periodic safety \cite{garg22Lcss}. Unlike recurrence, the system must reach the prescribed set at exact time instances while remaining in a safe set at all times.

Recurrence can also be viewed through the lens of signal temporal logic (STL), where bounded temporal operators can be used to specify repeated return to a prescribed set~\cite{STL_og}. While such formulations provide a natural language for expressing recurrence, enforcing it through a minimally invasive controller remains challenging.
Existing works have combined STL specifications with CBFs~\cite{LL-DVD:19,9147387}; however these works typically focus on \textit{safety} rather than \textit{recurrence}.

Drawing on connections between recurrence and reachability, the main result in this letter relies on ideas from \textit{backup control barrier functions} \cite{TG-MM-ADA-EF:18,TG-MM-AS-PN-EF-ADA:20,vanWijk_DRbCBF_24}. Backup CBFs generate safety guarantees by filtering for safe reachability to a known safe region under a backup controller. However, existing backup-CBF-related works focus specifically on forward invariance, rather than recurrence.

We build on the bCBF and recurrent CBF works to create a recurrence filtering framework. Our approach combines the notion of a \textit{backup reachability CBF} that guarantees the ability to return to the prescribed set within an allowable time, with a timing-based process that invokes a backup controller when a violation of the recurrence constraint is imminent.
We summarize our contributions below:
\begin{itemize}
    \item We establish a set-based characterization of controlled $T$-recurrence through finite-time backward reachability, replacing the trajectory-wise recurrence condition;
    \item We develop a tractable sufficient condition for controlled recurrence based on reachability under a backup controller through a novel \textit{backup reachability CBF};
    \item Using the reachability CBFs, we develop a timing-based switching mechanism for \textit{recurrence filtering} that allows the filtered nominal controller to act as long as there is a guarantee on a timely return to the prescribed set, thereby establishing forward recurrence of the set while preserving the nominal behavior whenever possible.
\end{itemize}

\section{Safety with Control Barrier Functions}\label{sec:CBF}

Consider the control affine dynamical system:
\begin{equation}\label{sys:ctrl_affine}
    \dot\bx = \bf(\bx)+\bg(\bx)\bu,
\end{equation}
with system state ${\bx\in\mathcal{X}}$ and control input ${\bu\in\Uc}$, where ${\Uc \subseteq\real^m}$ is a polytopic set. The drift dynamics ${\bf:\mathcal{X} \to \real^n}$ and the control matrix ${\bg:\mathcal{X} \to\real^{n \times m}}$ are assumed to be continuously differentiable. 
Given a controller ${\bk : \Xc \to \Uc}$, the state feedback ${\bu=\bk(\bx)}$ yields the  closed-loop system:
\begin{equation}\label{sys:closed_loop}
    \dot\bx = \bf_{\bk}(\bx) = \bf(\bx)+\bg(\bx)\bk(\bx),
\end{equation}
where a solution ${t\mapsto \bx(t)}$ is called a \textit{closed-loop trajectory}. When \eqref{sys:closed_loop} admits a unique forward-complete solution from every initial condition $\bx_0\!\in\!\Xc_0\!\subseteq\!\Xc$, we denote the corresponding solution map by ${\map{\bphi_\bk}{\real_{\geq 0}\!\times\!\Xc_0}{\Xc}}$ such that ${\bphi_\bk(t,\bx_0)=\bx(t)}$ when initialized at $\bx_0$.
In safety problems, we consider a state constraint set defined by a continuously differentiable \textit{constraint function} ${\map{\psi}{\mathcal{X}}{\real}}$:
\begin{equation}\label{eq:safety_constraint}
    \Sc \defeq \setdefb{\bx\in\mathcal{X}}{\psi(\bx)\geq 0}.
\end{equation}

Typically, safety requires the closed-loop trajectories $\bx(t)$ initialized at ${\bx_0\!\in\!\Sc}$ to stay in the set for all ${t\!\geq \!0}$. 
One can think of this as guaranteeing the forward invariance of $\Sc$.
\begin{definition}[Forward Invariance]
    A set ${\Cc \subseteq \mathcal{X}}$ is \textit{forward invariant} for the closed-loop system \eqref{sys:closed_loop} if $\bx_0 \triangleq\bx(0) \in \Cc \implies \bx(t) \in \Cc$, for all ${t \geq 0}$.
\end{definition}
For control systems, the corresponding notion is \textit{controlled invariance}, which characterizes whether the state trajectory can be kept within the set using admissible control inputs.
\begin{definition}[Controlled Invariance]
    A set ${\Cc \subseteq \mathcal{X}}$ is \textit{controlled invariant} for system \eqref{sys:ctrl_affine} if for every $\bx_0 \in \Cc$ there exists a control signal $t\mapsto\bu(t)$ satisfying $\bu(t)\in\Uc$ for all $t\geq 0$ such that the trajectory $\bx(t) \in \Cc$ for all $t\geq 0$.
\end{definition}
Especially when the control inputs are bounded, the set $\Sc$, defined by the constraint function $\psi$, is unlikely controlled invariant. In this case, practitioners often seek a \textit{subset} $\Cc_0\subseteq\Sc$ to render forward invariant instead. This set $\mathcal{C}_{0}$ is defined by a continuously differentiable function ${h : \mathcal{X} \rightarrow \R}$ as: 
\begin{align}\label{def:C0}
    \mathcal{C}_{0} \defeq \setdefb{\bx\in\mathcal{X}}{h(\bx)\geq 0} \subseteq \Sc.
\end{align}
To this end, control barrier functions (CBFs) are a tool for certifying controlled invariance of the set $\Cc_0$. 
\begin{definition}[Control Barrier Function \cite{ADA-XX-JWG-PT:17}]
    A continuously differentiable function ${h : \mathcal{X} \rightarrow \R}$ is a \textit{control barrier function (CBF)} for system~\eqref{sys:ctrl_affine} if there exists a class-$\Ke$ function $\alpha$ such that, for every ${\bx \in \mathcal{C}_0}$, there exists $\bu\in\Uc$ satisfying:
    \begin{equation}
    \label{eq:CBF}
        \dot h(\bx,\bu)\defeq L_\bf h(\bx)+L_\bg h(\bx)\bu > -\alpha(h(\bx)).
    \end{equation}
\end{definition}
The strict inequality in \eqref{eq:CBF} ensures that a control input $\bu\in\Uc$ satisfying the inequality exists not only on $\Cc_0$, but also on a neighborhood ${\Dc\supset \Cc_0}$. Given a continuous primary controller ${\bk_{\rm p}\! : \!\mathcal{X} \!\rightarrow \!\R^m}$ the following  quadratic program (QP):
\begin{align} 
    \bk_{\rm safe}(\bx) = \underset{\mathbf{u} \in \mathcal{U}}{\argmin} \mkern9mu &
    \left\Vert \mathbf{u} - \mathbf{k}_{\rm p}(\bx) \right\Vert^{2}
    \tag{CBF-QP} \label{eq:cbf-qp} \\
    \text{s.t.} \quad &
    \dot{h}(\bx, \mathbf{u}) \ge -\alpha(h(\bx)),
    \nonumber
\end{align}
is feasible on $\Dc$. Moreover, strict feasibility (Slater's condition) together with continuity of the problem data and convexity of the constraints ensures the resulting controller $\bk_{\rm safe}$ is continuous on $\Dc$, cf. \cite{PM-AA-JC:25}. The following main result for CBFs establishes forward invariance using only continuity of the closed-loop controller.
\begin{theorem}[\hspace{-0.01em}\cite{ADA-XX-JWG-PT:17}] \label{thm: cbf}
Consider the control affine system~\eqref{sys:ctrl_affine}. Let $\Dc\subseteq \Xc$ be an open neighborhood of $\Cc_0$.
Any continuous controller ${\bk:\mathcal{X} \to \mathcal{U}}$ satisfying: 
\begin{align} \label{eq: cbf_condition}
    \dot{h}(\bx, \bk(\bx)) \ge -\alpha(h(\bx)),
\end{align}
for all ${\bx \in \Dc}$ renders the set ${\mathcal{C}_0 \subseteq \Sc}$ forward invariant for the closed-loop system under state feedback $\bu=\bk(\bx)$.
\end{theorem}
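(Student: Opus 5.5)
The plan is the standard comparison-lemma argument, adapted to a merely continuous controller. First, under a continuous $\bk$ the closed-loop vector field $\bf_{\bk}=\bf+\bg\bk$ is continuous on $\Dc$. By Peano's theorem, solutions from every $\bx_0\in\Cc_0$ exist, though they need not be unique. I will therefore argue for an arbitrary solution $\bx(t)$ on its maximal interval $[0,\tau)$ and show it never leaves $\Cc_0$. Since $h$ is $C^1$, the function $\eta(t)\defeq h(\bx(t))$ is $C^1$ on any subinterval where $\bx(t)\in\Dc$. By \eqref{eq: cbf_condition}, it satisfies $\dot\eta(t)=\dot h(\bx(t),\bk(\bx(t)))\ge-\alpha(\eta(t))$ there.

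The key step is to rule out crossing the boundary without invoking a comparison lemma. Lipschitz continuity of $\alpha$ is not assumed, so the usual comparison lemma is delicate. I would argue by contradiction. Suppose $\eta(t_1)<0$ for some $t_1$, and let $t_0=\sup\{t\le t_1: \eta(t)\ge 0\}$. Then $\eta(t_0)=0$ and $\eta<0$ on $(t_0,t_1]$. Because $\Dc$ is an open neighborhood of $\Cc_0$ and $\bx(t_0)\in\Cc_0$, continuity gives $\bx(t)\in\Dc$ on some $[t_0,t_0+\delta)$, so the differential inequality holds there. On this interval $\eta<0$, hence $-\alpha(\eta)>0$ because $\alpha$ is extended class-$\Kc$. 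So $\dot\eta(t)\ge-\alpha(\eta(t))>0$ for $t\in(t_0,t_0+\delta)$, meaning $\eta$ is strictly increasing there. Then $\eta(t)>\eta(t_0)=0$ for those $t$, which contradicts $\eta<0$ on $(t_0,t_1]$. The sign property of $\alpha$ is the whole mechanism, which is why this direct argument avoids needing uniqueness or a Lipschitz $\alpha$.

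Finally, I conclude that $\bx(t)\in\Cc_0$ for all $t$ in the existence interval. This is exactly forward invariance in the sense of the definition, understood for all solutions and for all $t\ge0$ for which they exist. If forward completeness is part of the standing interpretation, as for $\bphi_\bk$, the statement holds for all $t\geq0$.

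The main obstacle is being careful that the inequality is only available on $\Dc$, not on all of $\Xc$. This is handled by the openness of $\Dc$ and localization at the first exit time $t_0$. Non-uniqueness of solutions does not harm the argument, because it applies to each solution individually.
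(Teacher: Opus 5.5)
Your proof is correct, but it takes a different route from the one behind the statement. The paper gives no proof of its own and defers to the cited work of Ames et al. There, as is standard, invariance is deduced from Nagumo's theorem using only the boundary information $\dot h(\bx,\bk(\bx))\ge-\alpha(0)=0$ on $\partial\Cc_0$, with the controller assumed Lipschitz so that closed-loop solutions are unique.

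You instead use the inequality at points of $\Dc$ where $h<0$, where it forces $\dot h>0$. A mean-value argument at the last time $t_0$ with $h(\bx(t_0))=0$ then rules out any exit; just take $\delta\le t_1-t_0$ so that $\eta<0$ on $(t_0,t_0+\delta)$.

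This buys exactly the generality the paper claims for "any continuous controller." Your argument applies to each Peano solution separately, so it needs no uniqueness and no Lipschitz continuity of $\bk$ or $\alpha$. It also needs no regularity such as $\nabla h\neq 0$ on $\partial\Cc_0$, which is what normally converts $\dot h\ge 0$ into Nagumo's tangent-cone condition.

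The Nagumo route is more economical in that it uses the inequality only on $\partial\Cc_0$. That economy genuinely fails for merely continuous controllers. Take $\dot x=u$ with $\bk(x)=\min\{\sqrt{|x|},1\}$, $h(x)=-x$, and $\alpha(r)=r/\sqrt{|r|}$ (with $\alpha(0)=0$). The inequality holds on all of $\Cc_0=\{x\le 0\}$, yet $x(t)=t^2/4$ is a solution from $x_0=0$ that leaves $\Cc_0$ at once. Here the inequality fails at every $x>0$, so it holds on no open neighborhood of $\Cc_0$.

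So you are right that the openness of $\Dc$ is the crux. Your argument is precisely what justifies the paper's relaxation from Lipschitz to merely continuous controllers.
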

By construction, the \eqref{eq:cbf-qp} satisfies the condition in \Cref{thm: cbf}, and hence, it renders $\Cc_0$ forward invariant for the closed-loop system.
Since ${\mathcal{C}_0 \subseteq \Sc}$, any trajectory initialized in $\Cc_0$ remains within the constraint set $\Sc$.
The \eqref{eq:cbf-qp} has proven very useful for many robotic and aerospace applications, as it enables the synthesis of safe controllers in a computationally tractable manner. However, in some cases, the notion of forward invariance may be overly restrictive. For example, in applications involving regions of interest or regions where localization measurements are available, it may suffice for the trajectory to return to a desired set regularly. We explore this idea in the following sections. 

\section{Recurrent Safety Constraints}

The CBF methods presented in the preliminaries focus on rendering a subset of the constraint set forward invariant, thereby requiring the closed-loop trajectories remain within the set for all future time. In some applications, however, a liveness requirement \cite{ALPERN1985181,10066157} may be more appropriate. In particular, we consider recurrence, where the closed-loop trajectory is allowed to leave a prescribed set but is required to revisit it regularly. We use the following definition.

\begin{definition}[Forward $T$-recurrence] \label{def:recur_forward}
Given a time horizon $T>0$, a set $\Cc\subseteq\Xc$ is \textit{forward $T$-recurrent} for the closed-loop system \eqref{sys:closed_loop} if, for any closed-loop trajectory $t\mapsto \bx(t)$ initialized from $\bx_0 \in \Cc$, we have for each $t\geq 0$:
\begin{align}
    \exists \, t' > t \nspace{7} {\rm with} \nspace{7} t' - t \leq T \nspace{7} {\rm s.t.} \nspace{7} \bx(t') \in \Cc.
\end{align}
\end{definition}
The corresponding control notion is \textit{controlled recurrence}, introduced in \cite{liu2025recurrent}, which requires the existence of an admissible control signal that renders the set forward recurrent.

\begin{definition}[Controlled $T$-recurrence \cite{liu2025recurrent}] \label{def:recur_controlled_inv}
Given a time horizon ${T>0}$, a set ${\Cc\subseteq\Xc}$ is \textit{controlled $T$-recurrent} for system~\eqref{sys:ctrl_affine} if, for every ${\bx_0\in\Cc}$, there exists a control signal ${t\mapsto\bu(t)}$ satisfying ${\bu(t)\in\Uc}$ for all ${t\geq 0}$ such that the resulting trajectory ${t\mapsto \bx(t)}$ satisfies, for each ${t \geq 0}$:
\begin{align}
    \exists \, t' > t \nspace{7} {\rm with} \nspace{7} t' - t \leq T \nspace{7} {\rm s.t.} \nspace{7} \bx(t') \in \Cc.
\end{align}
\end{definition}

Recurrence can be viewed as a relaxation of invariance, with invariance recovered in the limit as ${T \to 0^+}$. Thus, enforcing forward invariance when only recurrence is needed may be overly restrictive and may prevent the system from accomplishing its objectives, as seen in this simple example.
\begin{example}[Single Integrator]
Consider a one dimensional motion of a quadrotor modeled by a single integrator: 
\begin{equation} \label{sys:single_integrator}
    \dot{x} = u,
\end{equation}
with control ${u \in [-1,1]}$. The constraint set $\Sc$ is defined by the constraint function ${\psi(x) = \sin(x)}$
as in~\eqref{eq:safety_constraint}. Suppose that the set represents regions where localization measurements are available, and that the quadrotor must revisit such a region within every ${T=\pi}$ time units to maintain localization. That is, the set $\Sc$ is required to be forward $T$-recurrence. Let ${x_0 = 0}$, and suppose the objective is to reach ${x_{\rm g} = \frac{9\pi}{2}}$. 

Since forward invariance implies forward $T$-recurrence, the recurrence requirement may be satisfied by rendering $\Sc$ forward invariant, e.g., using the \eqref{eq:cbf-qp}. However, since the localization regions are disjoint, forward invariance prevents task completion, see \Cref{fig:hero_single_integrator}.
In contrast, using the notion of recurrence
permits temporary excursions from $\Sc$, allowing the trajectory to reach the goal.~\hfill $\bullet$
\end{example}
The example motivates the development of a controller filter for enforcing $T$-recurrence while preserving the nominal controller whenever possible. To this end, we seek conditions that characterize controlled recurrence in terms of sets rather than individual trajectories.
As in the CBF setting of \Cref{sec:CBF}, the set $\Sc$ itself may not be controlled $T$-recurrent. We therefore consider a subset of ${\Cc_0\subseteq\Sc}$, defined by a function $h$ as in \eqref{def:C0}.
We first recall the notion of the backward reachable set of ${\Cc_0}$. For system~\eqref{sys:ctrl_affine}, define:
\begin{align} \label{eq:BRS}
\begin{aligned}
    \mathcal{R}_{T}(\mathcal{C}_0)\!\triangleq \!\big\{ \bx_0 \!\in \!\Xc\! ~\vert~& \exists \nspace{2} \bu \in \mathcal{U}^{(0,T]} \nspace{7} {\rm s.t.}~{\rm with}~\bx(0)=\bx_0 ,\\ \nspace{7} \exists &\nspace{2} t \!\in\! (0,T], \nspace{2} \nspace{2} \bx(t) \!\in \!\mathcal{C}_0~ \big\}. \raisetag{12pt}
\end{aligned}
\end{align}
That is, the backward reachable set $\Rc_{T}(\Cc_0)$ consists of all states that can be controlled to reach $\Cc_0$ within the time horizon $T$.
Using this concept, we obtain the following set-based characterization of controlled $T$-recurrence.

\begin{theorem}[Equivalence of Controlled Recurrence]\label{thm:BRS}
Consider system~\eqref{sys:ctrl_affine} and a candidate set $\mathcal{C}_0$. The set $\mathcal{C}_0$ is controlled $T$-recurrent if and only if ${\mathcal{C}_0 \subseteq \mathcal{R}_{T}(\mathcal{C}_0)}$.
\begin{proof}
    \textbf{Sufficiency:}
    Assume ${\Cc_0\subseteq \Rc_{T}(\Cc_0)}$. Then by definition of $\Rc_{T}(\Cc_0)$, for each initial condition ${\bx^{(0)}=\bx_0\in\Cc_0}$, there exists a control signal $t\mapsto \bu^{(0)}(t)$, defined on the interval $(0,T]$, such that the corresponding trajectory initialized at $\bx^{(0)}$ satisfies $\bx(t^{(1)})\in\Cc_0$ at time $t^{(1)}\in (0,T]$. Define the state $\bx^{(1)} = \bx(t^{(1)})$ as the next initial condition. Then by induction, assuming $\bx^{(k)}\in\Cc_0$, there exists a control signal $t\mapsto \bu^{(k)}(t)$, defined on the interval $(t^{(k)},t^{(k)}+T]$, such that the corresponding trajectory initialized at $\bx(t^{(k)})=\bx^{(k)}$ satisfies $\bx(t^{(k+1)})\in\Cc_0$ at time $t^{(k+1)}\in(t^{(k)},t^{(k)}+T]$. 
    
    From this construction, the time sequence $\{t^{(k)}\}_{k=0}^\infty$ has two possibilities. Case $\lim_{k\to \infty} t^{(k)} = \infty$: we can define the signal $t\mapsto \bu(t)$ piecewise as:
    \begin{align*}
    \bu(t)= \bu^{(k)}(t)~\forall t\in(t^{(k)},t^{(k+1)}]
    \end{align*}
    for each $k\in\naturals$. This control signal $t\mapsto \bu(t)$ satisfies $\bu(t)\in\Uc$ for all $t\geq 0$ and renders $\bx(t^{(k)}) = \bx^{(k)}\in\Cc_0$. Noting that $t^{(k+1)}-t^{(k)}\leq T$ for all $k\in\naturals$, the set $\Cc_0$ is controlled $T$-recurrent. Case $\lim_{k\to \infty} t^{(k)} = t^\star\in\realpos$: we have that $\bx(t^*)\in\Cc_0$ from the set being closed. Therefore, we can continue to define $\bu(t)$ for $t>t^\star$ with the same process as above, until it is defined for all $t\geq 0$, and conclude similarly that the set $\Cc_0$ is controlled $T$-recurrent.

    \textbf{Necessity:}
    Assume $\Cc_0$ is controlled $T$-recurrent. Then by definition, given any $\bx_0\in\Cc_0$, there exists a control signal $t\mapsto \bu(t)$, such that the corresponding trajectory satisfies $\bx(t')\in \Cc_0$ at time $t'>0$ with $t'\leq T$. Hence, $\bx\in\Rc_{T}(\Cc_0)$, implying $\Cc_0 \subseteq\Rc_{T}(\Cc_0)$ as desired.
\end{proof}
\end{theorem}

While \Cref{thm:BRS} gives necessary and sufficient conditions for verifying that a candidate set $\mathcal{C}_0$ is controlled $T$-recurrent, computing the backward reachable set may be challenging for high-dimensional nonlinear systems, especially in the presence of input constraints. Although approaches to compute these sets have been well-studied in the literature, their scalability remains an active area of research.

To make the problem more tractable, we develop sufficient conditions for controlled $T$-recurrence by considering reachability under a \textit{single feedback controller}, as opposed to all possible sequences of control signals. In particular, consider a backup controller ${\bk_{\rm b}: \mathcal{X} \to \mathcal{U}}$. The feedback law $\bu=\bk_{\rm b}(\bx)$ defines a particular admissible control signal $\bu(t)$ in the definition~\eqref{eq:BRS} of the backward reachable set. Suppose that the resulting closed-loop system admits a unique solution over the horizon $[0,T]$, then
the flow is the solution to the following ordinary differential equation (ODE):
\begin{align} \label{eq:dyn_fk}
    \tfrac{\partial}{\partial \tau}\boldsymbol{\varphi}_{\bk_{\rm b}}\!(\tau,\bx)  = \bf_{\bk_{\rm b}}\!(\boldsymbol{\varphi}_{\bk_{\rm b}}\!(\tau,\bx)), \quad    \boldsymbol{\varphi}_{\bk_{\rm b}}\!(0,\bx) = \bx,
\end{align}
where 
${\bf_{\bk_{\rm b}}\!(\bx) = \bf(\bx) + \bg(\bx)\bk_{\rm b}(\bx)}$.
Using this flow, we construct ${\mathcal{C}_{T}(\mathcal{C}_0)}$ consisting of states $\bx$ from which $\mathcal{C}_0$ may be reached at a finite time ${T \!>\! 0}$ under the controller $\bk_{\rm b}$.
\begin{align} \label{eq:reachable_set_Tb}
\begin{aligned}
    \mathcal{C}_{T}(\mathcal{C}_0) \triangleq \{ \bx \in \mathcal{X} : \boldsymbol{\varphi}_{\bk_{\rm b}}\!(T,\bx) \in \mathcal{C}_0 \} \subseteq \mathcal{R}_{T}(\mathcal{C}_0), \\
    = \{ \bx \in \mathcal{X} : h_{\rm r}(\bx) \triangleq h\big(\boldsymbol{\varphi}_{\bk_{\rm b}}\!(T,\bx)\big) \geq 0 \}.
\end{aligned}
\end{align}
Note that as opposed to $\mathcal{R}_{T}(\mathcal{C}_0)$, $\mathcal{C}_{T}(\mathcal{C}_0)$ specifies that the system under $\bk_{\rm b}$ must reach $\mathcal{C}_0$  at \textit{exactly} time $T$. A more general construction that captures reachability \textit{within} time $T$ can also be developed; we briefly discuss this extension later in \Cref{rmk:combinatorial}, and defer its full development due to space limitations. However, even without this additional condition, the states in $\Cc_T(\Cc_0)$ can be controlled to reach $\Cc_0$ using the feedback law, and therefore, the set $\Cc_T(\Cc_0)$ is an inner approximation of the backward reachable set $\Rc_{T}(\Cc_0)$. Combining this inclusion with \Cref{thm:BRS} immediately yields the following sufficient condition for controlled $T$-recurrence.
\begin{lemma}[Sufficient Condition for Controlled Recurrence] \label{lemma:sufficient_CT}
Consider the closed-loop system~\eqref{sys:closed_loop} under the feedback controller ${\bk_{\rm b}: \mathcal{X} \!\to\! \mathcal{U}}$. If $\mathcal{C}_0 \subseteq \mathcal{C}_{T}(\mathcal{C}_0)$, then
$\mathcal{C}_0$ is controlled $T$-recurrent for~\eqref{sys:ctrl_affine}.
\begin{proof} 
    Under the assumption, we directly deduce the relation $\Cc_0\subseteq\Cc_{T}(\Cc_0) \subseteq \Rc_{T}(\Cc_0)$, which by \Cref{thm:BRS} is equivalent to $\Cc_0$ being controlled $T$-recurrent.
\end{proof}
\end{lemma}

By restricting our attention to a fixed feedback controller $\bk_{\rm b}$, we can avoid computing the full backward reachable set $\Rc_T(\Cc_0)$, and \Cref{lemma:sufficient_CT} reduces the verification of controlled $T$-recurrence to checking the set inclusion $\mathcal{C}_0 \subseteq \mathcal{C}_{T}(\mathcal{C}_0)$. The following proposition shows that this inclusion is automatically satisfied when $\bk_{\rm b}$ renders $\Cc_0$ forward invariant.

\begin{proposition} \label{prop:backupController}
Consider the closed-loop system~\eqref{sys:closed_loop} under the feedback controller ${\bk_{\rm b} : \mathcal{X} \rightarrow\mathcal{U}}$. If ${\mathcal{C}_0}$ is forward invariant, then, for any $T>0$, ${\mathcal{C}_0 \subseteq \mathcal{C}_{T}(\mathcal{C}_0)}$. As a consequence, $\Cc_0$ is controlled $T$-recurrent for~\eqref{sys:ctrl_affine}.
\begin{proof}
For any ${\bx\in\Cc_0}$, forward invariance of $\Cc_0$ implies ${\boldsymbol{\varphi}_{\bk_{\rm b}}\!(t,\bx)\in\Cc_0}$ for all time ${t\geq 0}$. In particular at time $T$, we have ${\boldsymbol{\varphi}_{\bk_{\rm b}}\!(T,\bx)\in\Cc_0}$, which by definition implies $\bx\in\Cc_T(\Cc_0)$. Hence, ${\mathcal{C}_0 \subseteq \mathcal{C}_{T}(\mathcal{C}_0)}$ as desired. The $T$-recurrence statement follows from \Cref{lemma:sufficient_CT}.
\end{proof}
\end{proposition}

\begin{remark}[Connection to Backup CBFs]
   A controller satisfying \Cref{prop:backupController} can be constructed via Lyapunov-based methods, where $\Cc_0$ is chosen from a suitable Lyapunov forward invariant sub-level set; see, e.g., \cite{LG-AKK-TGM:25,vanWijkGeneralized26}. For this reason, the existence of such a \textit{backup controller and set pair} is usually assumed in the \textit{backup CBF} literature 
   \cite{gurriet_online_2018,TG-MM-AS-PN-EF-ADA:20}.
\end{remark}

\Cref{prop:backupController} provides a simple means of constructing $\bk_{\rm b}$ satisfying ${\Cc_0\subseteq \Cc_T(\Cc_0)}$. Moreover, since the result holds for any horizon ${T>0}$, forward invariance ensures the exact-time requirement used to define $\Cc_T(\Cc_0)$ also captures reachability within the horizon $T$. Nevertheless, forward invariance of $\Cc_0$ under $\bk_{\rm b}$ is only a sufficient condition and is not required for our subsequent development. We therefore impose only the following weaker condition. 
\begin{assumption}\label{ass:controller_k}
The controller ${\bk_{\rm b}: \mathcal{X} \to \mathcal{U}}$ is designed such that ${\mathcal{C}_0 \subseteq \mathcal{C}_{T}(\mathcal{C}_0)}$.
\end{assumption}

We next consider the \textit{synthesis} of controllers that render the safe set $\Cc_0$ forward $T$-recurrent. The results developed in this section already provide means of constructing controllers that achieve this objective; for example, \Cref{prop:backupController} shows that a controller rendering $\Cc_0$ forward invariant is sufficient. However, such a controller may be unnecessarily restrictive if applied at all times. Instead,
we are interested in a \textit{controller-filtering} framework that preserves the behavior of a nominal controller whenever possible and intervenes only when necessary to guarantee forward $T$-recurrence.

\begin{problem} \label{prob:Main}
Given a primary controller ${\bk_{\rm p} : \mathcal{X} \rightarrow \R^m}$, design a controller ${\bk: \mathcal{X} \to \mathcal{U}}$, ${\bu = \bk(\bx)}$ that minimally modifies $\bk_{\rm p}$ while rendering $\Cc_0$ forward $T$-recurrent for \eqref{sys:closed_loop}.
\end{problem}

\section{Recurrent Safety Filtering} \label{sec:MainResults}
Our controller-filtering framework to address~\Cref{prob:Main} consists of two components. First, we develop a backup reachability CBF that renders $\Cc_{\Tb}(\Cc_0)$ forward invariant, thereby preserving the ability to return to $\Cc_0$ within a return horizon ${\Tb\in(0,T)}$. Second, we introduce a timing-based switching mechanism that allows the filtered primary controller to govern the system for up to ${T-\Tb}$ time units after leaving $\Cc_0$. If the trajectory does not return within this interval, the controller switches to $\bk_{\rm b}$, guaranteeing the return to $\Cc_0$ within the remaining time for recurrence.

\subsection{Backup Reachability Control Barrier Functions}
We first address the reachability-preservation component of the filtering framework. In particular, we seek to render $\Cc_{\Tb}(\Cc_0)$ forward invariant so that the ability to return to $\Cc_0$ is maintained during nominal operation. 

Since $\Cc_{\Tb}(\Cc_0)$ is a zero superlevel set of the function $h_{\rm r}$, forward invariance of $\Cc_{\Tb}(\Cc_0)$ can be enforced using the standard CBF framework whenever $h_{\rm r}$ is itself a CBF for system~\eqref{sys:ctrl_affine}. In addition to the forward invariance guarantees from \Cref{thm: cbf}, 
membership in $\mathcal{C}_{\Tb}(\mathcal{C}_0)$ ensures that the state can return to $\Cc_0$ under the backup controller $\bk_{\rm b}$. For this reason, we refer to $h_{\rm r}$ as a \textit{backup reachability CBF.} The following theorem formalizes this reachability guarantee.

\begin{theorem} \label{thm:control_recurrence}
Consider the control-affine system \eqref{sys:ctrl_affine}. Given a continuously differentiable backup controller ${\bk_{\rm b} : \mathcal{X} \to \mathcal{U}}$ satisfying \Cref{ass:controller_k} and a return horizon ${\Tb \in (0,T)}$, define ${\mathcal{C}_{\Tb}(\mathcal{C}_0)}$ and $h_{\rm r}$ as in \eqref{eq:reachable_set_Tb}. Define the sensitivity matrix  ${\mathbf{\Phi}(\tau,\bx) \triangleq \partial \boldsymbol{\varphi}_{\bk_{\rm b}}\!(\tau,\bx)/\partial \bx}$. Any continuous controller $\map{\bk}{\Xc}{\Uc}$ satisfying, for some class-$\Ke$ function $\alpha$:
\begin{multline} \label{eq:recur_CBF_constraint}
    \nspace{-20}\nabla h(\boldsymbol{\varphi}_{\bk_{\rm b}}\!(\Tb,\bx)) \mathbf{\Phi}(\Tb,\bx) \bf_\bk(\bx) \geq -\alpha\big({h}(\boldsymbol{\varphi}_{\bk_{\rm b}}\!(\Tb,\bx))\big), \nspace{-10}
    \raisetag{6pt}
\end{multline}
for all $\bx\in \Dc$ where $\Dc$ is an open neighborhood of $\Cc_{\Tb}(\Cc_0)$, renders the set $\Cc_{\Tb}(\Cc_0)$ forward invariant for the closed-loop system~\eqref{sys:closed_loop}. Thus, for any closed-loop trajectory $t\mapsto \bx(t)$ initialized in $\Cc_{\Tb}(\Cc_0)$, we have ${\boldsymbol{\varphi}_{\bk_{\rm b}}\!(\Tb,\bx(t))\in\Cc_0}$ for all ${t\geq 0}$, and thus, every state along the trajectory admits a return to $\Cc_0$ under the backup controller $\bk_{\rm b}$ in $\Tb$ time units.
\begin{proof}
    Noting that inequality \eqref{eq:recur_CBF_constraint} is exactly the CBF inequality~\eqref{eq: cbf_condition} for $h_r$, by applying the chain rule, forward invariance follows directly from \Cref{thm: cbf}. The return trajectory guarantee follows from the definition of set $\mathcal{C}_{\Tb}(\mathcal{C}_0)$.
\end{proof}
\end{theorem}
\Cref{thm:control_recurrence} expresses the CBF condition for the backup reachability CBF $h_{\rm r}$ in terms of the sensitivity matrix $\mathbf{\Phi}(\Tb,\bx)$. This matrix can be computed alongside the backup trajectory by integrating the following ODE:
\begin{align}
    \tfrac{\partial}{\partial\tau}\mathbf{\Phi}(\tau,\bx) = \tfrac{\partial \bf_{\bk_{\rm b}}}{\partial \bx}(\boldsymbol{\varphi}_{\bk_{\rm b}}\!(\tau,\bx))\mathbf{\Phi}(\tau,\bx), \nspace{10} \mathbf{\Phi}(0,\bx) = \mathbf{I}.
\end{align}

As with standard CBFs, when $h_{\rm r}$ is a CBF, a controller satisfying \eqref{eq:recur_CBF_constraint} can be obtained through a CBF-QP:
\begin{align} 
     \bk_{\rm QP}& (\bx) = \underset{\mathbf{u} \in \mathcal{U}}{\argmin} \mkern9mu 
    \left\Vert \mathbf{u} - \mathbf{k}_{\rm p}(\bx)  \right\Vert^{2} 
    \tag{brCBF-QP} \label{eq:rcbf-qp} \\
    & \textup{s.t.} \quad \textup{\eqref{eq:recur_CBF_constraint}}.
    \nonumber
\end{align}

\subsection{Switching-Based Recurrent Controller}

We complete the filtering framework by introducing the timing-based switching mechanism that leverages the backup reachability guarantee from the previous section to enforce forward $T$-recurrence.
Consider the clock variable $\vartheta$, initialized at $\vartheta(0)=0$, with hybrid dynamics described by: 
\begin{align}\label{sys:hybrid}
    \mathcal{H} : 
    \begin{cases}
        \dot{\vartheta} = 1, \nspace{24} (\bx,\vartheta) \in \mathcal{F} , \\
        \vartheta^{+} = 0,\nspace{12}(\bx,\vartheta) \in \mathcal{J}, 
    \end{cases}
\end{align}
where ${\mathcal{F} \triangleq \{(\bx,\vartheta) \in \mathcal{X}\times [0,\infty) :  h(\bx) \leq 0 \}}$ and $\mathcal{J} \triangleq \{(\bx,\vartheta) \in \mathcal{X}\times [0,\infty) :  h(\bx) \geq 0 \}$. These hybrid dynamics capture how long the constraint $h$ has been negative. Based on this, we may design a controller that utilizes the filtered output $\bk_{\rm QP}$ whenever ${\vartheta < T - \Tb}$, and otherwise switches to the controller $\bk_{\rm b}$ that is guaranteed to bring \eqref{sys:ctrl_affine} to $\mathcal{C}_{0}$.

\begin{theorem}\label{thm:switching_controller}
Under the setting of \Cref{thm:control_recurrence} and the clock dynamics $\Hc$ defined in~\eqref{sys:hybrid},
the switched controller ${\bk_{\rm s} : \mathcal{X} \times \R_{\geq 0} \to \mathcal{U}}$ given by:
\begin{align} \label{eq:switching_controller}
    \bk_{\rm s}(\bx,\vartheta) = 
    \begin{cases}
        \bk(\bx) & \text{if } \vartheta < T - \Tb,\\
        \bk_{\rm b}(\bx) & \text{if } \vartheta \geq T - \Tb,
    \end{cases}
\end{align}
renders $\mathcal{C}_0$ forward $T$-recurrent for the closed-loop system \eqref{sys:closed_loop} under the feedback $\bu=\bk_{\rm s}(\bx,\vartheta)$.
\begin{proof}
    Assume the set $\Cc_0$ is not forward $T$-recurrent. Then there exists a time ${t^*\!\geq\! 0}$ such that for ${t\!\in\!(t^*,t^*\!+\!T)}$, we have ${\bx(t) \notin \Sc}$, as well as ${(\bx,\vartheta)\in\Fc}$. Hence, ${\dot\vartheta(t) = 1}$ over the entire duration, implying that ${\vartheta(t) \!\geq\! \vartheta(t^*)\!+\!(t\!-\!t^*)}$. Noting that ${\vartheta(t^*) \geq 0}$ by definition, we can deduce that ${\vartheta(t^*\!+\!T\!-\!\Tb)\!\geq\! T\!-\!\Tb}$. As such, the switched controller enters the backup mode ${\bk_{\rm s}(\bx,\vartheta)\!=\!\kb(\bx)}$ no later than time ${t_{\rm switch}\!=\!t^*\!+\!T\!-\!\Tb}$. Because for all $t \!\in \!(t^*, t^*\!+\!t_{\rm switch})$, ${\bk_{\rm s}(\bx,\vartheta)\!=\!\bk(\bx)}$, \Cref{thm:control_recurrence} guarantees that ${\boldsymbol{\varphi}_{\bk_{\rm b}}\!(\Tb,\bx(t))\!\in\!\Cc_0}$  and therefore ${\bx(t^*\!+\!t_{\rm switch}\!+\!T_{\rm b})\!\in\!\Cc_0}$. This is a contradiction since ${t^*\!+\!t_{\rm switch}\!+\!\Tb \!\leq\! t^*\!+\!\Tb}$, and hence, $\Cc_0$ is forward $T$-recurrent.
\end{proof}
\end{theorem}

\Cref{thm:switching_controller} completes the recurrence-filtering framework, providing a solution to~\Cref{prob:Main}. The controller $\bk$ preserves backup reachability while minimally modifying the nominal controller, and $\bk_{\rm b}$ is invoked only when the remaining time reaches the guaranteed return horizon $\Tb$. Together these mechanisms turn backup reachability into a liveness filter.

\begin{remark}[Generalization to Recurrence of $\Sc$]
    The switching mechanism can alternatively track excursions from the larger set $\Sc$ by defining the flow and jump sets, $\Fc$ and $\mathcal{J}$, in~\eqref{sys:hybrid} using the function $\psi$ rather than $h$. In this case, the filter guarantees all trajectories initialized in $\Cc_{\Tb}(\Cc_0)$ either return to the set $\Sc$ naturally or return to ${\Cc_0\subseteq\Sc}$ via switching, within time $T$. In the case where ${\Sc\subseteq\Cc_{\Tb}(\Cc_0)}$, the set $\Sc$ is rendered forward $T$-recurrent.
\end{remark}

\begin{remark}[Combining Multiple Recurrent Sets]\label{rmk:combinatorial}
    In the case where multiple $\Cc_{\Tb}(\Cc_{0,j})$ sets with associated backup reachability CBFs $h_{{\rm r} ,j}$'s are available, it is possible to render the union of these sets forward invariant by applying the maximum operator over $h_{{\rm r},j}$ and using the combinatorial CBF framework~\cite{PO-HL-TGM-DP-ADA:25-csl} to render the union of $\Cc_{0,j}$ forward $T$-recurrent. Similarly, backup reachability CBFs constructed using different return horizons may be combined this way to obtain an approximation of within-horizon reachability, rather than reachability at a single return time.
    The theoretical details are omitted for brevity, but these ideas will be successfully utilized in \Cref{ex:planar_quadrotor}.
\end{remark}

\section{Quadrotor Visibility Example}
\label{ex:planar_quadrotor}
\begin{figure}[h]
    \centering
    \includegraphics[width=.78\linewidth]{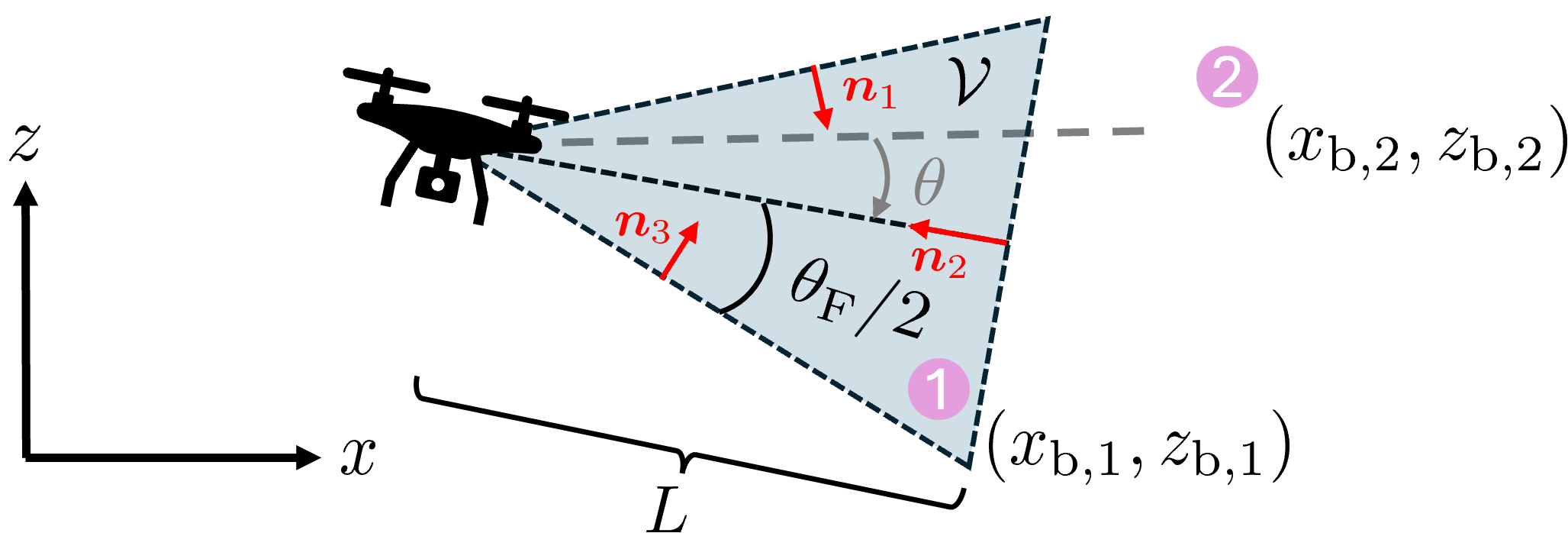}
    \vspace{-.1cm}
    \caption{Visibility cone for planar quadrotor in \Cref{ex:planar_quadrotor}. The shaded blue region $\mathcal{V}$ represents all positions that are viewable by \eqref{eq:quadrotor}.}
    \label{fig:droneCamera}
    \vspace{-.4cm}
\end{figure}
\begin{figure*}[t] 
    \centering
    \begin{minipage}{0.5\textwidth}
        \centering
        \includegraphics[width=\linewidth]{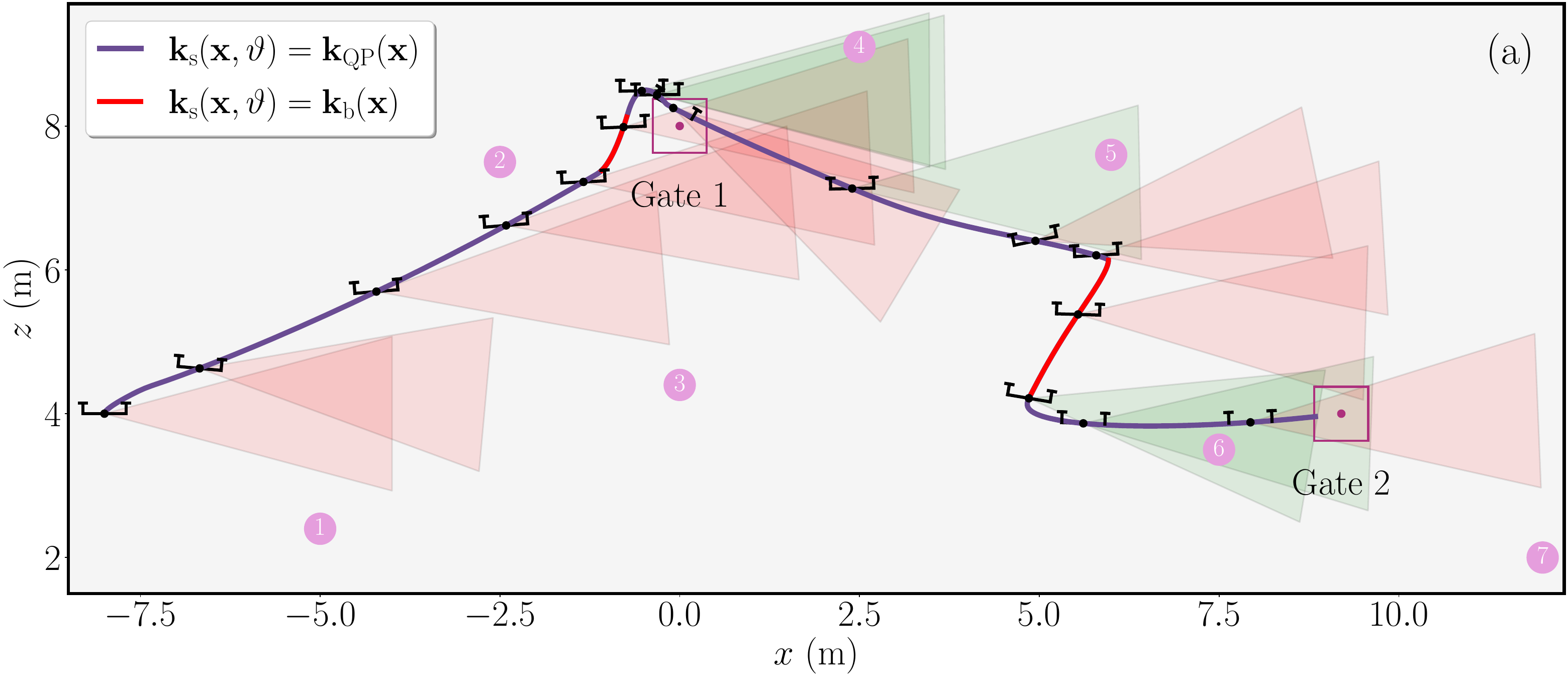}
    \end{minipage}
    \begin{minipage}{0.46\textwidth}
        \centering
        \includegraphics[width=\linewidth]{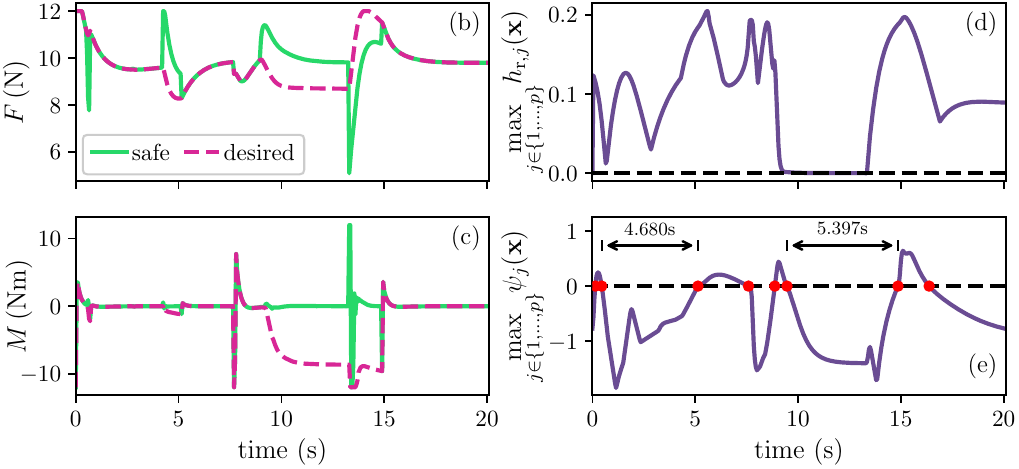}
    \end{minipage}
    \vspace{-.14cm}
    \caption{Simulation results of the drone racing problem with visibility recurrence.
    Panel (\textbf{a}) plots the trajectory of the quadrotor (purple and red) using controller \eqref{eq:switching_controller}, with ${p = 7}$ beacons (pink circles). Green and red cones indicate that a beacon is or is not in the drone's FOV, respectively. Our approach guarantees the forward $T$-recurrence for \eqref{eq:quadrotor} with ${T = 6 \,{\rm s}}$ while minimizing deviations from the primary controller whenever possible (\textbf{b},$\nspace{2}$\textbf{c}), which enables the drone to visit both gates (\textbf{a}). The value of the function describing the disjoint set $\Sc$ never drops below zero for longer than $T$ seconds, indicating forward $T$-recurrence (\textbf{e}). The reachability CBF is always kept positive (\textbf{d}), indicating that a viable backup plan always exists.}
    \vspace{-.35cm}
    \label{fig:droneMain}
    \vskip -0.1 true in
\end{figure*}
Consider a planar version of an autonomous racing quadrotor with mass ${m}$ and principal moment of inertia ${J}$:
\begin{align}
\begin{aligned}
    \ddot{x} \!=\! F \sin(\theta)/m, \nspace{5}
    \ddot{z} \!=\! F \cos(\theta)/m \!-\! g_{\rm D}, \nspace{5}
    \ddot{\theta} \!=\! -M/J,
    \label{eq:quadrotor}
\end{aligned}
\end{align}
with state ${\bx = [x \nspace{5}z\nspace{5} \theta\nspace{5} \dot{x}\nspace{5} \dot{z}\nspace{5} \dot{\theta}]^\top \!\in\! \mathcal{X}\! \subset \!\mathbb{R}^2\! \times \!\mathbb{S}^1\! \times \!\mathbb{R}^3}$.
States $x$ and $z$ represent the horizontal and vertical drone positions,
and $\theta$ is its pitch angle.  The inputs are the thrust
${F \in [0,F_{\rm max}]}$ and moment ${M \in [-M_{\rm max}, M_{\rm max}]}$. Above, ${g_{\rm D}}$ is the gravitational constant.

Consider a scenario in which the quadrotor must traverse a set of racing gates while using its front-facing camera to occasionally localize itself using a number of beacons. There are $p$ beacons at ${\mathbf{p}_{{\rm b},j} = [x_{{\rm b},j} \nspace{7} z_{{\rm b},j}]^\top \in \R^2}$, for all ${j \in \{1,\dots,p\}}$. The camera is affixed to the body with field of view given by ${\theta_{\rm F}}$ (see \Cref{fig:droneCamera}) and a fixed focal length ${L}$.
Therefore, the ${j^{\rm th}}$ beacon is in view if ${\bx \in \mathcal{S}_{j}}$ where
\begin{align}
\mathcal{S}_{j} \!=\! \bigg \{ \bx \in \mathcal{X} :  \psi_{j}(\bx) \!=\! {-\frac{1}{\kappa} \ln \Big(\nspace{-2} \sum_{i \in I} {\rm e}^{-\kappa \beta_i(\bx)}\Big)} \geq 0 \bigg \},
\end{align}
with smoothing parameter ${\kappa > 0}$ and ${I = \{1,2,\nspace{-2}3\}}$. Here, $\beta_i$ are half-space constraints of the form:
\begin{align}
    \beta_i(\bx) = \mathbf{n}_i\cdot(\mathbf{p}_{{\rm b},j} - (\mathbf{p} + \mathbf{p}^{\rm o}_i)),
\end{align}
where ${\mathbf{p} = [x \nspace{7} z]^\top}$, ${\mathbf{p}^{\rm o}_i \in \R^2}$ is a state-dependent offset, and ${\mathbf{n}_i\in \R^2}$, ${\mathbf{n}_i \neq \mathbf{0}}$ is a normal vector. 

In this scenario, we consider the forward $T$-recurrence of ${\Sc  \!=\! \{ \bx\! \in\! \mathcal{X} \!:\! \max_{j \in 
\{1,...,p\}} \psi_j (\bx) \!\geq\! 0\} \!=\! \bigcup_{j=1}^{p} \Sc_j}$, that is, the quadrotor should never go longer than $T$ seconds without viewing a beacon such that the localization uncertainty never becomes too large. We obtain $p$ subsets of $\Sc$ with ${\mathcal{C}_{0,j}}$ for ${j \in \{1,\dots,p\}}$, with corresponding controllers, $\bk_{{\rm b},j}$, that each render ${\mathcal{C}_{0,j}}$ forward invariant. This can be done as in \cite[Ex. 2]{vanWijkGeneralized26} by using a sub-level set of a Lyapunov function centered around an equilibrium point of \eqref{eq:quadrotor}, placed directly in front of a beacon. To guarantee the reachability of the union of these subsets we utilize combinatorial CBFs and the maximum operator (see \Cref{rmk:combinatorial}).

\Cref{fig:droneMain} plots the results using our approach for the complex drone visibility problem\footnote{Here we use ${T \!=\! 6} \nspace{2}{\rm s}$, ${\Tb \!= \!2.2 \nspace{2}{\rm s}}$, ${L \!= \!4 \nspace{2}{\rm m}}$, ${\theta_{\rm F}\! = \!30 ^\circ}$, ${g_{\rm D} \!=\! 9.81 \nspace{2}{\rm m/s^2}}$, ${J \!=\! {\rm 0.25 \nspace{2} kg \nspace{1} m^2}}$, ${m \!=\! {\rm 1 \nspace{2} kg}}$, ${F_{\rm max} \!=\! 12 \nspace{2} {\rm N}}$, ${M_{\rm max} \!=\! 12 \nspace{2} {\rm Nm}}$.}. The quadrotor can successfully traverse both gates whilst ensuring that a beacon is never out of view for longer than $T$ seconds. In panel (a), the purple trajectory segments show where the outputs of the \eqref{eq:rcbf-qp} are used, which ensures the reachability of at least one of the sets ${\mathcal{C}_{0,j}}$. As per \Cref{thm:switching_controller}, when the value of the clock $\vartheta$ exceeds ${T - \Tb}$, system \eqref{eq:quadrotor} will execute one of the backup controllers $\bk_{{\rm b},j}$ in order to view a beacon before time runs out. This occurs twice during the simulation, both indicated by the red trajectory segment. The first instance occurs near Gate 1, where the quadrotor increases its altitude and stabilizes its orientation to view beacon 4. The second instance occurs near the Gate 2, where the drone reduces its altitude to view beacon 6, and then completes the task by navigating through the gate. These results demonstrate successful task completion along with satisfaction of the liveness condition.

\section{Conclusion}

We presented a novel approach to guarantee the forward recurrence of nonlinear systems using a minimally invasive, filtering-based approach. Our framework combines the notion of a backup reachability CBF that ensures the existence of a controller that satisfies recurrence, and a timing-based mechanism that switches to the reaching controller when necessary. Our approach guarantees forward recurrence while 
minimizing the deviation from the nominal controller when allowable, enabling behavior that requires the system to exit the prescribed set, such as traveling between disjoint sets.
\end{spacing}

\vskip - 4mm
\begin{spacing}{0.936}
\bibliographystyle{ieeetr}
\bibliography{
    bib/alias,
    bib/PO,
    bib/main-Pio,
    bib/backup
}
\end{spacing}
\end{document}